\newtheorem{thm}{Theorem}
\newtheorem{lem}[thm]{Lemma}
\newtheorem{cor}[thm]{Corollary}
\newtheorem{definition}[thm]{Definition}
\newenvironment{proof}[1][Proof]{\begin{trivlist}
\item[\hskip \labelsep {\bfseries #1}]}{\end{trivlist}}
\newcommand{\qed}{\nobreak \ifvmode \relax \else
      \ifdim\lastskip<1.5em \hskip-\lastskip
      \hskip1.5em plus0em minus0.5em \fi \nobreak
      \vrule height0.75em width0.5em depth0.25em\fi}
\title{Below All Subsets for Some Permutational Counting Problems}
\author{Andreas Bj\"orklund \\ Department of Computer Science, Lund University}
\date{}
\begin{document}

\maketitle 
\begin{abstract}
We show that the two problems of computing the permanent of an $n\times n$ matrix of $\operatorname{poly}(n)$-bit integers and counting the number of Hamiltonian cycles in a directed $n$-vertex multigraph with $\operatorname{exp}(\operatorname{poly}(n))$ edges can be reduced to relatively few smaller instances of themselves. In effect we derive the first deterministic algorithms for these two problems that run in $o(2^n)$ time in the worst case. Classic $\operatorname{poly}(n)2^n$ time algorithms for the two problems have been known since the early 1960's.
Our algorithms run in $2^{n-\Omega(\sqrt{n/\log n})}$ time.
\end{abstract}

\renewcommand{\labelenumi}{\alph{enumi}.}

\section{Introduction}
We show that two well-known computationally hard counting problems defined over permutations, admit a strong form of self-reducibility. The problems are

\begin{itemize}
\item \textsc{Permanent}: Given an $n\!\times\! n$ matrix $M$ with $\operatorname{poly}(n)$--bit integer elements, compute $\operatorname{per}(M)=\sum_{\sigma \in S_n} \prod_{i} M_{i,\sigma(i)}$ where $S_n$ is the set of all permutations on $n$ elements.
\item \textsc{HamCycles}: Given an $n$-vertex directed multigraph, compute its number of Hamiltonian cycles, i.e. the number of non-crossing spanning cycles.
\end{itemize}

For both problems, we show that the solution to an instance of size parameter $n$ can be reduced to a weighted sum of the solutions to $\operatorname{poly}(n)2^{n-k}$ instances of size parameter $k<n$ of the same problem. Moreover, this reduction can be carried out in time polynomial in $n$ per generated instance.
We use this new relation to derive deterministic $2^{n-\Omega(\sqrt{n/\log n})}$ time algorithms for both \textsc{Permanent} and \textsc{HamCycles}. 
As a direct corollary we obtain an $Mn^22^{n-\Omega(\sqrt{n/log (Mn)})}+M^2n^4$ time algorithm for Asymmetric TSP in graphs with integer arc weights in $[0,\dots, M]$.

This is as far as the author knows the first deterministic algorithms that compute these quantities faster than explicitly inspecting at least a constant fraction of all subsets of an $n$-element set. In particular, no $o(2^n)$ time algorithms were previously known. 

Our techniques here are elementary and the presentation is more-or-less self-contained. The main components are inclusion--exclusion counting, polynomial interpolation, and the Chinese remainder theorem. The speed-up is obtained through tabulation.

The two problems have well-known $\operatorname{poly}(n)2^n$ time algorithms: Ryser's algorithm based on inclusion--exclusion for the permanent \cite{R63} from 1963, and a simple variation of Bellman, Held and Karp's dynamic programming algorithm for TSP \cite{B62,HK62} from 1962. Later a polynomial space inclusion--exclusion algorithm in the same spirit as Ryser's for counting Hamiltonian cycles with the same running time was found \cite{KGK77} in 1977 (and was rediscovered twice \cite{K82,B93}). 

The  question of existence of $O((2-\Omega(1))^n)$ time algorithms for the two problems are well-known open problems. In comparison the recent $O(1.657^n)$ time algorithm for Hamiltonian cycle \cite{B10} is randomized, only works for undirected graphs, and
 cannot even approximate the number of solutions. Very recently, Cygan et al.~\cite{C13} gave an algorithm for Hamiltonicity detection in bipartite directed graphs in $O(1.888^n)$ time. Still, not only have there been no deterministic algorithms running in $o(2^n)$ worst case time for the counting problems, it was not even known how to detect a Hamiltonian cycle in a directed $n$-vertex graph that fast, probabilistic algorithms included. Nor was it known how to compute the permanent of an $n\times n$ $0-1$ matrix deterministically in $o(2^n)$ time.
 
Moreover, Knuth asks in exercise 4.6.4.11. [M46] in \cite{K2} if it is possible to compute a real $n\times n$-matrix permanent with less than $2^n$ arithmetic operations. We note that reals of bounded precision can be modeled by large integers, so our algorithm here works also for them. However, a table look-up is not an arithmetic operation, so our algorithm is not exactly what Knuth solicited.

The one general previous improvement over $\operatorname{poly}(n)2^n$ time for any of the two exact counting problems we are aware of is the $2^{n-\Omega(n^{1/3}\log n)}$ \emph{expected} time algorithm for the $0-1$ matrix version of \textsc{Permanent} by Bax and Franklin \cite{BF02}.  Their technique can be extended to work with $O(1)$-bit integers, but probably not beyond that. In contrast, besides being faster and in deterministic time, our algorithm handles $\operatorname{poly}(n)$-bit integers, including negative ones.

The two known $\operatorname{poly}(n)2^n$ time algorithms for the problems based on the principle of inclusion--exclusion, Ryser's \cite{R63} and Kohn et al.'s \cite{KGK77} respectively, both use only polynomial space. It is indeed very natural to ask if employing the unexploited resource of using almost as much space as time wouldn't lead to faster algorithms. The problem though with the known approaches above is that there is no evident candidate for what to tabulate. They both sum over too large and typically different combinatorial objects. In the case of Ryser's permanent it is an $n$-element vector, and in Kohn et al.'s Hamiltonian cycles it is an induced graph on $n/2$ vertices on average.
 
The key insight here enabling a speed-up from tabulation is that the two problems admit a  mapping from the original instances down to a linear combination of not too many much smaller ones. So small in fact that they are bound to coincide, making tabulation worthwhile. 

\subsection{Overview of the Technique}
Consider the \textsc{Permanent} case, the \textsc{HamCycles} is similar.
The speed-up is obtained in a series of steps. 
First we let $k=c\sqrt{n/\log n}$ for a constant $c$ depending on the largest absolute element in the input matrix.
Next we employ the existence part of the Chinese remainder theorem to bring matrix elements down to $d\log n$ bits each for some $d$. That is, we compute the permanent modulo small primes $p$ of size polynomial in $n$. For each such prime $p$, 
we construct $\operatorname{poly}(n)2^{n-k}$ $k\times k$-matrices such that the permanent of the original one is equal to the sum of weighted permanents of all the matrices constructed. This reduction is in itself a two step procedure composed of a reduction to an inclusion--exclusion formula over polynomial matrices, accompanied by polynomial interpolation. We count the occurrences of each of the smaller matrices in a table. Next we compute the permanent once for each of the different smaller matrices appearing in the sum using the classic $\operatorname{poly}(k)2^k$ time algorithm.
We note that there are at most $n^{dk^2}<\!\!<2^n$ different such matrices of size $k\times k$.  The original instance permanent is then computed as a linear combination of all the tabulated matrices' permanent values. Finally, the results for all considered primes $p$ are assembled via the constructive part of the Chinese remainder theorem.

\subsection{Organization}
In Section~\ref{sec: sr} we give a self-contained description of the self-reduction, anticipating that this part of the results may be of independent interest. The main results, the $o(2^n)$ algorithms for the two counting problems, are described in Section~\ref{sec: alg}. 

\section{The Self-Reduction}
\label{sec: sr}
The two problems \textsc{Permanent} and \textsc{HamCycles} are closely related. At a first glance it appears that the first asks about a property of matrices and the second about graphs, but they can be expressed in the same language. For the purpose of this paper, we will redefine both the \textsc{Permanent} and the \textsc{HamCycles} problem in terms of arc-weighted complete directed graphs to stress their similarity. In the remainder of this paper, the graph $G_n=(V,A)$ will denote the complete directed graph on $n$ vertices $V$ labelled $1$ through $n$.

The set of all permutations on $n$ elements, denoted by $S_n$, can naturally be partitioned after the number of cycles the permutation describes: A permutation $\sigma\in S_n$ can be interpreted as a directed graph on $n$ vertices, labeled $1$ through $n$, with the arcs $i,\sigma(i)$ for all $i$. Every vertex has exactly one outgoing and one incoming arc, i.e. the graph is a set of disjoint cycles covering the vertices. We will with $S^1_n$ denote the subset of $S_n$ of permutations consisting of exactly one such cycle. Hence the permanent can be viewed upon as a sum over cycle covers of a graph, and the Hamiltonian cycles a sum over cycle covers consisting of just one cycle.

In the following it will make sense to be explicitly clear about what ring the computation is over. Thus we extend our problem definitions to
\begin{definition}[$R$-\textsc{Permanent}]
\label{def: per}
Given a complete directed graph $G_n=(V,A)$ and a function $f:A\rightarrow R$ mapping the arcs to some ring $R$, the \emph{permanent} of $(G,f)$ over $R$, denoted $\operatorname{per}(G,f)$, is 
$
\sum_{\sigma\in S_n} \prod_{i=1}^n f(i\sigma (i)).
$
\end{definition}
\begin{definition}[$R$-\textsc{HamCycles}]
\label{def: hc}
Given a complete directed graph $G_n=(V,A)$ and a function $f:A\rightarrow R$ mapping the arcs to some ring $R$, the \emph{hamcycles} of $(G,f)$ over $R$, denoted $\operatorname{hc}(G,f)$, is 
$
\sum_{\sigma\in S^1_n} \prod_{i=1}^n f(i\sigma (i)). 
$
\end{definition}

In the remainder of this section we will prove the following two lemmas:
\begin{lem}
\label{lem: per sr}
Given an instance $(G_n,f)$  to $F$-\textsc{Permanent} with $f$ mapping arcs to a field $F$ having at least $(n-k)n+1$ elements, and a positive integer $k<n$, one can compute $m=((n-k)n+1)2^{n-k}$ instances $I_i=(G_k,f_i)$ to $F$-\textsc{Permanent} and constants $a_i\in F$ for $i=1,\ldots, m$ ,  so that
\[
\operatorname{per}(G_n,f)=\sum_{i=1}^{m} a_i\operatorname{per}(G_k,f_i)
\]
Moreover, the constructed smaller instances and constants can be produced in polynomial in $n$ arithmetic operations $+$ and $*$ over $F$ per instance.
\end{lem}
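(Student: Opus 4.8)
The plan is to fix a $k$-element subset $U\subseteq V$, put $W=V\setminus U$ (so $|W|=n-k$), and express $\operatorname{per}(G_n,f)$ by ``contracting away'' the vertices of $W$. Reading a cycle cover $\sigma\in S_n$ as a disjoint union of cycles, each cycle either meets $U$ -- in which case, between consecutive visits to $U$, it subdivides an arc of a cycle on $U$ by a path through $W$ -- or it lies entirely inside $W$. I would introduce a formal variable $x$ that records the total number of visits to vertices of $W$, and relax ``path/cycle through $W$'' to ``walk through $W$'', so that the $W$-part of the structure becomes computable by matrix arithmetic. Concretely, for each $T\subseteq W$ I would form the $k\times k$ matrix $N_T(x)$ over $F[x]$ whose $(u,u')$ entry is the generating polynomial, truncated at degree $n-k$, of all walks from $u$ to $u'$ in $G_n$ with every interior vertex in $T$, a walk with $\ell$ interior vertices (counted with multiplicity) receiving the weight $x^{\ell}$ times the product of its arc values; this is obtained from the four submatrices $f_{UU},f_{UT},f_{TU},f_{TT}$ of $f$ by repeatedly multiplying by $xf_{TT}$ and truncating, hence using $\operatorname{poly}(n)$ ring operations. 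I would also form a companion polynomial $z_T(x)\in F[x]$ of degree at most $n-k$, recording the cyclic $W$-internal part of the structure (systems of closed walks supported on $T$).

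Next I would establish the combinatorial identity
\[
\sum_{T\subseteq W}(-1)^{|W|-|T|}\,\operatorname{per}(G_k,N_T(x))\,z_T(x)
=\sum_{\mathcal S}x^{\nu(\mathcal S)}w(\mathcal S),
\]
where $\mathcal S$ ranges over the ``closed-walk systems'' on $V$ -- a cycle cover of $U$ with each arc subdivided by a $W$-walk, together with some closed $W$-walks -- that visit every vertex of $W$ at least once, $\nu(\mathcal S)$ counts $W$-visits and $w(\mathcal S)$ is the product of arc values. This is pure M\"obius inversion over the subset lattice of $W$: after expanding $\operatorname{per}(G_k,N_T(x))\,z_T(x)$ combinatorially, a system whose used $W$-vertices form the set $S$ contributes to the $T$-term exactly when $S\subseteq T$, and $\sum_{T\colon S\subseteq T}(-1)^{|W|-|T|}$ equals $1$ if $S=W$ and $0$ otherwise. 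The payoff is that the coefficient of $x^{\,n-k}$ on the right is exactly $\operatorname{per}(G_n,f)$: a system that touches each of the $n-k$ vertices of $W$ with only $n-k$ visits in total must visit each of them exactly once, which forces every walk to be a simple cycle and the cycles to be vertex-disjoint, i.e.\ to form an honest $\sigma\in S_n$ of weight $\prod_i f(i\sigma(i))$, and conversely each such $\sigma$ occurs once; and truncating at degree $n-k$ cannot disturb this coefficient.

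It then remains to extract that coefficient by interpolation. Since each entry of $N_T(x)$ and each $z_T(x)$ has degree at most $n-k$, the left-hand side $P(x):=\sum_{T\subseteq W}(-1)^{|W|-|T|}\operatorname{per}(G_k,N_T(x))z_T(x)$ is a polynomial of degree at most $k(n-k)+(n-k)=(k+1)(n-k)\le (n-k)n$, using $k<n$. Fixing $(n-k)n+1$ distinct points $x_0,\dots,x_{(n-k)n}\in F$ -- this is where $F$ must have at least $(n-k)n+1$ elements -- Lagrange interpolation yields fixed constants $c_0,\dots,c_{(n-k)n}\in F$ with $\operatorname{per}(G_n,f)=[x^{n-k}]P(x)=\sum_j c_j\,P(x_j)$. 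Expanding $P(x_j)=\sum_{T\subseteq W}(-1)^{|W|-|T|}z_T(x_j)\,\operatorname{per}(G_k,N_T(x_j))$ exhibits the required instances $I_{T,j}=(G_k,N_T(x_j))$, over the $2^{n-k}$ sets $T$ and the $(n-k)n+1$ points $x_j$, with constants $a_{T,j}=c_j(-1)^{|W|-|T|}z_T(x_j)$; this is $m=((n-k)n+1)2^{n-k}$ instances, each together with its constant produced with $\operatorname{poly}(n)$ ring operations.

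The step I expect to carry the real weight of the argument is the treatment of the $W$-internal cycles, i.e.\ pinning down $z_T(x)$. One has to define the generating polynomial for the cyclic part so that a genuine set of vertex-disjoint simple cycles ends up counted with weight exactly $1$ after the inclusion--exclusion and after taking the $x^{n-k}$ coefficient, while keeping $z_T(x)$ a bona fide polynomial computable purely by matrix operations: one cannot afford to enumerate the $2^{|T|}$ subsets of $T$, and one must control repeated and non-primitive closed walks, and in particular avoid any division by a walk length since $F$ may have small characteristic. By contrast, the analogous statement for $\operatorname{hc}$ should be cleaner, because a Hamiltonian cycle has no cycle lying wholly inside $W$, so there the factor $z_T$ is absent and $N_T(x)$ alone suffices.
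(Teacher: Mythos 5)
Your overall architecture matches the paper's exactly: fix a $k$-vertex kernel, replace each kernel arc by a truncated walk-generating polynomial in a rank variable over the subsets $T$ of the outside vertices, cancel self-crossing walks by inclusion--exclusion over $T$, observe that the coefficient of $x^{n-k}$ isolates the genuine cycle covers, and recover that coefficient by Lagrange interpolation at $(n-k)n+1$ points (your degree bound $(k+1)(n-k)$ is even slightly better than the paper's $(n-k)n$, which it gets because its cyclic factor has formal degree $(n-k)|X|$ rather than $n-k$). The \textsc{HamCycles} remark at the end is also correct.

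However, the one piece you explicitly leave open --- ``pinning down $z_T(x)$'' --- is precisely the only nontrivial new idea in the permanent case, so as written the proof has a genuine gap. You correctly enumerate the obstacles (each disjoint union of simple cycles inside $W$ must be generated with coefficient exactly $1$, no division by walk lengths since $F$ may have small characteristic, no enumeration of subsets of $T$), but you do not exhibit a polynomial that meets them. The paper's resolution is an \emph{anchoring} device: using the vertex labels' natural order, for each $s\in T$ it defines
\[
C_T(s)=1+\sum_{i=1}^{n-k} W_{T_{\geq s},i}(s,s),\qquad T_{\geq s}=\{v\in T: v\geq s\},
\]
i.e.\ closed ranked walks that start and end at $s$ and are confined to vertices of $T$ not smaller than $s$, and it takes the \emph{product} $\prod_{s\in T}C_T(s)$ as the cyclic factor. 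A system of vertex-disjoint simple cycles inside $W$ then arises from exactly one choice of terms --- each cycle from the factor of its minimum vertex, the $1$ from every other factor --- so it is counted once with no division; crossing or overlapping systems are killed by the same even/odd subset argument as the walks through the kernel; and each $C_T(s)$ is computable by the same dynamic program as the ranked walks. Without this (or an equivalent canonical decomposition of closed-walk systems), the identity you assert for $z_T(x)$ does not yet exist. Everything else in your write-up goes through essentially verbatim once $z_T(x):=\prod_{s\in T}C_T(s)$ is substituted, at the cost of replacing your degree bound $(k+1)(n-k)$ by $(n-k)k+(n-k)^2=(n-k)n$, which still fits the stated number of interpolation points.
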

\begin{lem}
\label{lem: hc sr}
Given an instance $(G_n,f)$ to $F$-\textsc{HamCycles} with $f$ mapping arcs to a field $F$ having at least $(n-k)k+1$ elements, and a positive integer $k<n$, one can compute $m=((n-k)k+1)2^{n-k}$ instances $I_i=(G_k,f_i)$ to $F$-\textsc{HamCycles} and constants $a_i\in F$ for $i=1,\ldots, m$ , so that
\[
\operatorname{hc}(G_n,f)=\sum_{i=1}^{m} a_i\operatorname{hc}(G_k,f_i)
\]
Moreover, the constructed smaller instances and constants can be produced in polynomial in $n$ arithmetic operations $+$ and $*$ over $F$ per instance.
\end{lem}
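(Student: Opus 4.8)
I will prove Lemma~\ref{lem: hc sr} in detail and read Lemma~\ref{lem: per sr} off the same template. Fix a $k$-subset $W\subseteq V$, put $U=V\setminus W$ (so $|U|=n-k$), and let $M$ be the weighted adjacency matrix $M_{ij}=f(ij)$, with $M_{WT},M_{TT},M_{TW}$ its submatrices on the indicated index sets. The three stages are: (i) an inclusion--exclusion ranging only over subsets of $U$, which pushes the ``$W$-part'' of the count into a single-cycle quantity forced to meet each vertex of $W$ exactly once; (ii) a contraction of the $U$-part expressing this quantity as one coefficient of a formal variable $x$ in the $\operatorname{hc}$ of an explicit $k\times k$ matrix over $F[x]$; (iii) polynomial interpolation in $x$, replacing that polynomial $\operatorname{hc}$ by finitely many numeric $\operatorname{hc}$'s of size $k$. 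For (i): for $T\subseteq U$ let $N(T)$ be the $f$-weighted number of closed walks in $G_n$ of length exactly $n$ that use only vertices of $W\cup T$ and visit every vertex of $W$ exactly once, each walk counted once as a cyclic object. I claim $\operatorname{hc}(G_n,f)=\sum_{T\subseteq U}(-1)^{|U\setminus T|}N(T)$. Indeed, a walk counted by $N(T)$ is also counted by $N(T')$ for every $T'$ containing its set of used $U$-vertices, so the signed sum keeps exactly the walks that use all of $U$; such a walk has $n$ vertex-slots of which $k$ lie on $W$ (each once), so it uses every vertex exactly once and is a directed Hamiltonian cycle, and conversely each Hamiltonian cycle is kept once (at $T=U$).

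For (ii)--(iii): a walk counted by $N(T)$ meets $W$ in one cyclic order, i.e.\ in a single-cycle permutation $\pi$ of $W$, with consecutive $W$-vertices joined by a walk internal to $T$. Letting $x$ mark the number of $T$-intermediates, define $\mathcal B_T(x)\in F[x]^{k\times k}$ by $(\mathcal B_T(x))_{ab}=f(ab)+\sum_{\ell=1}^{n-k}x^{\ell}\sum_{c_1,\dots,c_\ell\in T}f(ac_1)f(c_1c_2)\cdots f(c_\ell b)$. Expanding $\operatorname{hc}(G_k,\mathcal B_T(x))=\sum_{\pi}\prod_{a}(\mathcal B_T(x))_{a\pi(a)}$ reconstructs exactly these walks tagged by their number of $T$-intermediates, so $N(T)=[x^{n-k}]\operatorname{hc}(G_k,\mathcal B_T(x))$. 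Each entry of $\mathcal B_T(x)$ has degree $\le n-k$, so $P_T(x):=\operatorname{hc}(G_k,\mathcal B_T(x))$ has degree $\le k(n-k)=(n-k)k$; since $|F|\ge(n-k)k+1$ I may fix $(n-k)k+1$ distinct nodes $\xi_j\in F$, and Lagrange interpolation writes $N(T)=[x^{n-k}]P_T(x)$ as a fixed $F$-linear combination of the values $P_T(\xi_j)=\operatorname{hc}(G_k,\mathcal B_T(\xi_j))$. Each $\mathcal B_T(\xi_j)\in F^{k\times k}$ is computed in $\operatorname{poly}(n)$ ring operations by summing the matrix series $\sum_{\ell}\xi_j^{\ell}M_{WT}M_{TT}^{\ell-1}M_{TW}$ via Horner, and $\operatorname{hc}(G_k,\mathcal B_T(\xi_j))$ is the value of $F$-\textsc{HamCycles} on the size-$k$ instance $(G_k,\mathcal B_T(\xi_j))$. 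Ranging over the $2^{n-k}$ sets $T$ and the $(n-k)k+1$ nodes $\xi_j$ yields the $m=((n-k)k+1)2^{n-k}$ instances; the constant attached to $(T,\xi_j)$ is $(-1)^{|U\setminus T|}$ times the corresponding Lagrange weight, and each instance and constant is produced in $\operatorname{poly}(n)$ ring operations. This proves Lemma~\ref{lem: hc sr}.

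For Lemma~\ref{lem: per sr} the same three stages apply with \emph{cycle covers} in place of single cycles: $N(T)$ now counts weighted cycle-cover-like systems on $W\cup T$ of total length $n$ visiting each $W$-vertex exactly once, and under contraction the object splits into an arbitrary permutation of $W$ with $T$-detours (contributing $\operatorname{per}(G_k,\mathcal B_T(x))$) \emph{together with} the cycles of the cover lying entirely inside $T$; folding in a generating function for the latter raises the degree of the interpolated polynomial to $\le n(n-k)$, hence the node count to $(n-k)n+1$ and the field-size requirement to $(n-k)n+1$. I expect the crux to be getting the two inclusion--exclusion identities of stage~(i) exactly right, the permanent case being the delicate one: the relaxation of ``cycle'' used to define $N(T)$ must be chosen so that simultaneously the signed sum over $T$ collapses to precisely the cycle covers of $G_n$ (no over- or under-counting) and the $W$-avoiding cycles are packaged into a factor that is still a $\operatorname{poly}(n)$-time computable polynomial of the stated degree. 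Once the identity is pinned down, the contraction of stage~(ii) and the interpolation of stage~(iii) are routine, as is matching the degrees against the hypotheses on $|F|$.
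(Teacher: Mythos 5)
Your proposal is correct and follows essentially the same route as the paper: an inclusion--exclusion over subsets of the $n-k$ non-kernel vertices, arc weights on the kernel graph augmented by polynomial "detour" terms whose variable marks the number of outside vertices used (your $\mathcal B_T(x)$ is the paper's $f_X$ written in expanded rather than recursive form), extraction of the coefficient of $x^{n-k}$, and Lagrange interpolation at $(n-k)k+1$ points. The only differences are cosmetic --- you apply inclusion--exclusion to the walk counts $N(T)$ before contracting, whereas the paper applies it to the coefficient of the polynomial $\operatorname{hc}(G_k,f_X)$, and you compute the entries by matrix powers rather than the paper's dynamic program on ranked walks.
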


\subsection{Preliminaries}
In a complete directed graph $G_n$ a \emph{walk} of length $l$ is a sequence of not necessarily distinct vertices $v_0,v_1,\ldots, v_l$. If $v_0=v_l$ we say that the walk is a \emph{closed} walk.
For a field $F$ and an indeterminate $r$, we denote by $F[r]$ the polynomial ring over $F$ of polynomials in $r$ with coefficients from $F$. For a polynomial $p(r)\in F[r]$ we denote by $[r^n]p(r)$ the coefficient of the monomial $r^n$ in $p(r)$.

\subsection{Step 1. Inclusion--exclusion}
Consider an instance $(G_n,f)$ to either $F$-\textsc{Permanent} or $F$-\textsc{HamCycles} for some field $F$.
We fix a subset $K\subseteq V$ of the vertices of size $|K|=k$, called the \emph{kernel} of the reduction. Without loss of generality, we let $K$ be the vertices labeled by $1,2,\ldots,k$, and hence $V-K$ be the vertices labelled by $k+1,k+2,\ldots, n$. 

Our resulting instances will all be over the kernel $K$, i.e. embedded on the graph $G_k$. The central idea is to represent the parts of a cycle cover covering the vertices $V-K$, by arcs in $G_k$ between the entry and exit points of the cycles in $K$. This approach of representing parts of a cycle cover outside a small subgraph by encoding them on the arcs of the subgraph was previously used by the author both in \cite{B12} and \cite{B10}. The novelty here, is the observation that these reductions can be seen as a mapping to a low degree univariate polynomial, that in step 2 in the next section will be efficiently brought back to the original field.

In this first step, we construct one instance per subset of $V-K$, and use the principle of inclusion--exclusion to relate them to the original instance. The resulting instances will not be over the original field $F$ though. Instead the function $f$ giving weights to the arcs will assign polynomials in one rank indeterminate $r$ to them.

First we define the ranked walks in a vertex subset $X$.  The degree of the indeterminate $r$ counts the number of vertices visited along the walk.
For any vertices $u,v\in X\subseteq V$ we let $W_{X,k}(u,v)$ be the ranked walks between vertices $u$ and $v$ visiting $k$ vertices in $X$. We set
\begin{equation}
\label{eq: rw}
W_{X,k}(u,v)=\left\{\begin{array}{ll} \sum_{w\in X} W_{X,k-1}(u,w)f(w,v) r & : k>0 \\ 
	1 & : k=0 \wedge u=v \\
	0 & : k=0 \wedge u\neq v
	\end{array} \right.
\end{equation}

The ranked walks will be used to make sure all vertices outside the kernel $K$ are visited by the cycle covers in the \textsc{Permanent} case and the Hamiltonian cycles in the \textsc{HamCycles} case.
The principle of inclusion--exclusion makes sure crossing walks are cancelled. Since the \textsc{HamCycles} case is somewhat easier technically, we describe it first.

\subsubsection{Inclusion--exclusion for \textsc{HamCycles}}
We will construct instances to $F[r]$-\textsc{HamCycles} defined on $G_k=(K,A_K)$. We 
let $f_{X}:A_K \rightarrow F[r]$ for $X\subseteq V-K$ be defined for all $u,v\in K$ as follows

\begin{equation}
\label{eq: f_X}
f_{X}(uv)=f(uv)+\sum_{w,z\in X}f(uw) \left( \sum_{i=0}^{n-k-1}W_{X,i}(w,z) \right) f(zv)\cdot r.
\end{equation}

The point is that $f_{X}(uv)$ encodes all possible choices between either staying in $K$ by choosing the arc $uv$ directly or taking a detour through $V-K$ consisting of $1,2,\ldots,n-k$ vertices starting in $u$ and ending in $v$.

\begin{lem}
\label{lem: hc}
With $G_n,f,K,k,G_k,f_X$ as above it holds that
\[
\operatorname{hc}(G_n,f)=[r^{n-k}]\sum_{X\subseteq V-K }(-1)^{|V-K-X|} \operatorname{hc}(G_k,f_X) 
\]
\end{lem}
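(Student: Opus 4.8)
The plan is to expand the right-hand side over the choice of $\sigma \in S^1_k$ on the kernel, substitute the definition \eqref{eq: f_X} of $f_X$, and show that the inclusion--exclusion over subsets $X$ together with the coefficient extraction $[r^{n-k}]$ selects exactly the Hamiltonian cycles of $G_n$. First I would fix a Hamiltonian cycle $\tau \in S^1_n$ and analyze its contribution. Such a $\tau$ traverses the vertices $V-K$ in a collection of maximal "detour" segments, each of which leaves $K$ at some vertex $w$, wanders through some vertices of $V-K$, and re-enters $K$ at some vertex $z$; contracting each such segment (and the trivial "stay in $K$" steps) to a single arc of $G_k$ yields a well-defined single cycle $\sigma \in S^1_k$. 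Conversely, a term in $\operatorname{hc}(G_k,f_X)$ coming from $\sigma \in S^1_k$, after expanding each factor $f_X(i\sigma(i))$ into its two summands in \eqref{eq: f_X} and each $W_{X,i}$ via \eqref{eq: rw}, is a sum over closed-walk-like structures: a choice, for each arc of $\sigma$, of either a direct arc in $K$ or a walk $w \to \cdots \to z$ through $X$, with the number of $V-K$-vertices visited tracked by the exponent of $r$.

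Next I would pin down the role of $[r^{n-k}]$. Each detour through $X$ of length $\ell$ (visiting $\ell$ vertices of $X$) contributes a factor $r^{\ell}$ — one power from the explicit $\cdot r$ in \eqref{eq: f_X} accounting for... actually one needs to check the bookkeeping: the walk $W_{X,i}$ carries $r^i$ and the extra $\cdot r$ in \eqref{eq: f_X} bumps a length-$i$ internal walk on the vertices $w,z,\dots$ to the correct total count of visited $V-K$ vertices, so that a detour visiting $t$ vertices of $V-K$ contributes $r^t$. Summing the exponents over all detours of a given term, extracting $[r^{n-k}]$ forces the total number of visited $V-K$-vertices (counted with multiplicity, since walks may repeat vertices) to be exactly $n-k$. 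So after the coefficient extraction we are left, for each $\sigma \in S^1_k$ and each $X$, with a signed sum over closed-walk systems that collectively visit exactly $n-k$ vertices of $X$ (with multiplicity).

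The heart of the argument — and the step I expect to be the main obstacle — is the inclusion--exclusion cancellation: $\sum_{X \subseteq V-K} (-1)^{|(V-K)-X|}(\cdots)$ kills every term except those in which the walk system visits \emph{every} vertex of $V-K$, and, combined with the exact-count constraint $r^{n-k}$ forced above, visits each exactly once. The standard mechanism is that for a fixed walk system whose support (set of actually-visited $V-K$-vertices) is $Y \subseteq V-K$, the term appears with the same value in $\operatorname{hc}(G_k,f_X)$ for every $X \supseteq Y$, so its total signed contribution is $\bigl(\sum_{Y \subseteq X \subseteq V-K}(-1)^{|(V-K)-X|}\bigr)$ times that value, which vanishes unless $Y = V-K$. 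I would make this precise by defining, for a term arising in the expansion, its support as the union of the vertex sets of all its constituent $X$-walks, checking that the term's coefficient depends on $X$ only through the requirement $X \supseteq \mathrm{supp}$, and invoking the binomial identity $\sum_{j=0}^{m}(-1)^j\binom{m}{j} = [m=0]$. Once only the full-support, each-vertex-once systems survive, each such system together with $\sigma$ reconstructs a genuine single cycle on all $n$ vertices, i.e. an element of $S^1_n$, and its weight is exactly $\prod_{i=1}^n f(i\tau(i))$; conversely every $\tau \in S^1_n$ is obtained exactly once. This establishes the claimed identity. A minor point to verify along the way is that no spurious $\sigma \in S^1_k$ can combine with detours to produce something that is \emph{not} a single cycle on $n$ vertices after un-contraction — but since contracting each detour of a single $n$-cycle gives a single $k$-cycle and this operation is reversible, the correspondence is exact.
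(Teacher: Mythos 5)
Your proposal is correct and follows essentially the same route as the paper's proof: expand $\operatorname{hc}(G_k,f_X)$ into closed walks through $K$ with detours in $X$, use $[r^{n-k}]$ to force exactly $n-k$ visits to $V-K$ counted with multiplicity, and cancel any walk system with a repeated $V-K$ vertex via the alternating sum over all $X$ containing its support. The paper phrases this last step as ``a crossing walk has support a proper subset of $V-K$, and the supersets of that support split evenly between even and odd size,'' which is exactly your binomial identity $\sum_{j}(-1)^j\binom{m}{j}=[m=0]$.
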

\begin{proof}
By the definition of $F$-\textsc{HamCycles} Def.~\ref{def: hc}, we have
\[
\operatorname{hc}(G_k,f_X) = \sum_{\sigma\in S_k^1} \prod_{i=1}^k f_X(i\sigma(i))
\]
Expanding $f_X$ via Eq.~\ref{eq: f_X}, we get
\[
\operatorname{hc}(G_k,f_X) = 
\sum_{\sigma\in S_k^1} \! \prod_{i=1}^k \! \left(f(i\sigma(i))\!+\!\sum_{l=1}^{n-k} r^l\!\!\!\!\!\!\sum_{v_1,\ldots,v_l \in X}\!\!\!\!\!\!\!\!\!f(iv_1) \left( \prod_{j=1}^{l-1}f(v_jv_{j+1}) \right) f(v_l\sigma(i))\!\right)
\]
From the formula above, we see that $[r^{n-k}]hc(G_k,f_X)$ is a sum over contributions
$
\prod_{i=1}^n f(v_iv_{i+1})
$
from closed walks $v_1,v_2,\ldots,v_{n+1}$ with $v_{n+1}=v_1$ where 
\begin{enumerate}
\item Exactly $n-k$ of $v_1,\ldots, v_n$ belong to $X$, and
\item Each vertex in $K$ occurs exactly once in $v_1,\ldots, v_n$.
\end{enumerate}

In the inclusion--exclusion summation over $X\subseteq V-K$,
\[
\operatorname{hc}(G_n,f)=\sum_{X\subseteq V-K} (-1)^{|V-K-X|}[r^{n-k}]\operatorname{hc}(G_k,f_X)
\]
each walk that crosses itself, i.e. has $v_i=v_j$ for some $i<j\leq n$, will be counted an even number of times. Moreover, exactly half of these times it will be added to the sum and the other half it will be subtracted, thereby canceling in the sum. To see why, let $Y=\{v_i|v_i\in V-K\}$ for a crossing walk. Clearly $Y\subset V-K$ since there are precisely $n-k$ vertices from $V-K$ on every contributing walk, and when one occurs at least twice there must be another one that is missing. Since among the subsets $Z$ fulfilling $Y\subseteq Z \subseteq V-K$ there are as many even sized subsets as odd ones the claim follows. Contributing walks that do not cross themselves however, i.e. are Hamiltonian cycles in $G$, will only be counted once, for $X=V-K$.
\qed
\end{proof}
\subsubsection{Inclusion--exclusion for \textsc{Permanent}}
In addition to the ranked walks in $V-K$ we also need to keep track of ranked cycles in $V-K$ for the \textsc{Permanent}. We want to sum over all cycle covers of the input graph $G$ and unlike the \textsc{HamCycles} case we may have vertices in $V-K$ disconnected from $K$ in a cycle cover. Remember that the vertices in $V$ are labelled $1,2,\ldots, n$ and associate the natural ordering $<$ of them. We need to define cycles in a cycle cover so that they receive a unique identifier to avoid double counting in our polynomial identity. To this end, we use that every cycle has a minimum vertex under the ordering to define the ranked closed walks \emph{anchored} at $s\in X$ as
\begin{equation}
\label{eq: CX}
C_X(s)=1+\sum_{i=1}^{n-k} W_{X_{\geq s},i}(s,s)
\end{equation}
where $X_{\geq s}=\{v|s\leq v\in X\}$, i.e. all vertices in $X$ equal to or larger than $s$.
The cycles anchored at $s$ represents all cycles of length $1,2,\ldots,n-k$ in $V-K$ where $s$ is the smallest vertex on the cycle. Note in particular that self-loops through $s$ are also included in the sum. The $1$ is in the definition of Eq.~\ref{eq: CX} to take into account the possibility that no cycle is anchored at $s$ in a contributing cycle cover.

\begin{lem}
\label{lem: per}
With $G_n,f,K,k,G_k,f_X$ as above it holds that
\[
\operatorname{per}(G_n,f)=[r^{n-k}]\sum_{X\subseteq V-K}  (-1)^{|V-K-X|} \operatorname{per}(G_k,f_X)\prod_{s\in X} C_X(s) 
\]
\end{lem}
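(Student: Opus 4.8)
The plan is to follow the same template as the proof of Lemma~\ref{lem: hc}, tracking which combinatorial objects in $G_n$ are generated by the coefficient of $r^{n-k}$ on the right-hand side, and then arguing that crossing configurations cancel under the inclusion--exclusion sum over $X\subseteq V-K$. First I would expand $\operatorname{per}(G_k,f_X)$ using Def.~\ref{def: per} and Eq.~\ref{eq: f_X}, exactly as in the \textsc{HamCycles} case, so that a term picks, for each arc $i\sigma(i)$ of a cycle cover $\sigma\in S_k$, either the direct arc in $K$ or a detour walk of length $l\ge 1$ through $X$ contributing a factor $r^l$. Then I would expand the extra product $\prod_{s\in X}C_X(s)$ via Eq.~\ref{eq: CX}: for each $s\in X$ we either take the $1$ (no cycle anchored at $s$) or take a closed walk of length $l\ge 1$ through $X_{\ge s}$ starting and ending at $s$, contributing a further factor $r^l$. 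Collecting the coefficient of $r^{n-k}$ forces the total number of $V-K$-vertices used along all detours and all anchored closed walks to be exactly $n-k$.

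Next I would characterize exactly what $[r^{n-k}]\operatorname{per}(G_k,f_X)\prod_{s\in X}C_X(s)$ sums over: collections consisting of a cycle cover of $K$ with detours through $X$ (the $K$-touching cycles) together with a disjoint family of closed walks in $X$, one optionally anchored at each $s\in X$ it uses as its minimum vertex, such that (1) the total number of $X$-vertices, counted with multiplicity across all pieces, is $n-k$, and (2) each vertex of $K$ is used exactly once. The anchoring convention $X_{\ge s}$ in Eq.~\ref{eq: CX} guarantees that a genuine simple cycle in $V-K$ with minimum vertex $s$ is counted once and only once among the $C_X(s)$ factors (it cannot appear in $C_X(s')$ for $s'>s$ since $s\notin X_{\ge s'}$, nor for $s'<s$ since a walk anchored at $s'$ through $X_{\ge s'}$ that actually visits $s$ would have $s'$ as a revisited-or-not vertex but still $s'$ is on it). I would then observe that when every piece is a simple cycle and all pieces are vertex-disjoint and together they cover all of $V-K$, the configuration is precisely a cycle cover of $G_n$ contributing $\prod_i f(i\sigma(i))$, and this happens exactly for $X=V-K$ with no cancellation. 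Every other contributing configuration is \emph{crossing}: some vertex of $V-K$ is visited at least twice (across the union of all walks and cycles) or not at all; since exactly $n-k$ $V-K$-vertices are used with multiplicity, a repeat forces an omission, so the set $Y\subseteq V-K$ of \emph{distinct} $V-K$-vertices appearing is a proper subset of $V-K$. For such a configuration, it is generated identically for every $Z$ with $Y\subseteq Z\subseteq V-K$ (the walks and anchored cycles only ever reference vertices in $Y\subseteq Z$, and the anchoring condition $X_{\ge s}$ is unaffected because adding vertices of $V-K\setminus Y$ to $X$ does not change $X_{\ge s}\cap(\text{used vertices})$), and $\sum_{Y\subseteq Z\subseteq V-K}(-1)^{|V-K-Z|}=0$, so all crossing configurations cancel.

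The main obstacle I expect is the bookkeeping around the anchoring convention and double counting of cycles in $V-K$: one must check carefully that each simple cycle entirely inside $V-K$ is assigned to a unique anchor $s$ (its minimum), that the ``$+1$'' option in $C_X(s)$ correctly accounts for anchors with no attached cycle, and — most delicately — that the cancellation argument still goes through when a crossing configuration contains cycles in $V-K$ rather than only detours off $K$. The subtlety is that whether a given closed walk is recorded via $C_X(s)$ depends on $X$ only through the membership $s\in X$ and the set $X_{\ge s}$; I would verify that for a fixed crossing configuration with vertex set $Y$, varying $X$ over $Y\subseteq X\subseteq V-K$ produces the very same algebraic term (same anchors, same walks, same sign up to the global $(-1)^{|V-K-X|}$), which is what makes the inner alternating sum vanish. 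Once that invariance is pinned down, the rest is the same even/odd-subset counting argument used in Lemma~\ref{lem: hc}, and the claimed identity follows by taking $[r^{n-k}]$ of both sides.
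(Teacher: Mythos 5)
Your proposal is correct and follows essentially the same route as the paper's proof: expand $\operatorname{per}(G_k,f_X)\prod_{s\in X}C_X(s)$, identify $[r^{n-k}]$ as a sum over collections of closed walks using exactly $n-k$ vertices of $X$ with each kernel vertex used once, and cancel crossing configurations by the even/odd subset count over $Y\subseteq Z\subseteq V-K$. Your extra care about the anchoring convention $X_{\geq s}$ and the invariance of a fixed configuration as $X$ ranges over supersets of $Y$ (with the $+1$ absorbing newly added anchors) only makes explicit details the paper leaves implicit.
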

\begin{proof}
By the definition of $F$-\textsc{Permanent} Def.~\ref{def: per}, we have
\[
\operatorname{per}(G_k,f_X) \prod_{s\in X} C_X(s)=\sum_{\sigma\in S_k} 
\prod_{j=1}^k f_X(j\sigma(j)) \prod_{i=k+1}^n C_X(i)
\]
Expanding $C_X$ via Eq.~\ref{eq: CX} and $f_X$ via Eq.~\ref{eq: f_X}, we get
\begin{multline*}
\operatorname{per}(G_k,f_X) \prod_{s\in X} C_X(s)= \\
\sum_{\sigma\in S_k}\prod_{i=1}^k \! \left(f(i\sigma(i))\!+\!\sum_{l=1}^{n-k} r^l\!\!\!\!\!\!\sum_{v_1,\ldots,v_l \in X}\!\!\!\!\!\!\!\!\!f(iv_1) \left( \prod_{j=1}^{l-1}f(v_jv_{j+1}) \right) f(v_l\sigma(i))\!\right) \\
\cdot \prod_{i=k+1}^n \left( 1+\sum_{l=1}^{n-k} r^l\!\!\!\!\!\!\sum_{\substack{v_1,\ldots,v_l\in X_{\geq i}\\i=v_1}} f(v_lv_1)\prod_{j=1}^{l-1} f(v_{j}v_{j+1}) \right) 
\end{multline*}

Expanding the formula above into a sum--product formula by identifying terms, we see that 
\[
[r^{n-k}]\operatorname{per}(G_k,f_X)\prod_{s\in X} C_X(s)
\] 
is a sum over contributions
$
\prod_{i=1}^l \prod_{uv\in O_i} f(uv)
$
for $1\leq l \leq n$ closed walks $O_i=v_{i,1},v_{i,2},\ldots,v_{i,m_l},v_{i,m_l+1}$ with $v_{i,1}=v_{i,m_l+1}$ and $\sum_{i=1}^l m_i = n$ where
\begin{enumerate}
\item Exactly $n-k$ of the $v_{i,j}$ for $1\leq i\leq n,1\leq j\leq m_i$ belong to $X$, and
\item Each vertex in $K$ occurs exactly once in the closed walks $O_i,1\leq i\leq l$.
\end{enumerate}

In the inclusion--exclusion summation over $X\subseteq V-K$,
\[
\operatorname{per}(G_n,f)=\!\!\!\!\!\!\sum_{X\subseteq V-K} (-1)^{|V-K-X|}[r^{n-k}]\!\operatorname{per}(G_k,f_X)\prod_{s\in X}C_X(s)
\]
each set of closed walks $\{O_i\}$ that crosses itself, i.e. has $v_{i1,j1}=v_{i2,j2}$ for some $i1\neq i2 \vee j1\neq j2$, will be counted an even number of times. Moreover, exactly half of these times it will be added to the sum and the other half it will be subtracted, thereby canceling in the sum. To see why, again let $Y=\{v_{i,j}|v_{i,j}\in V-K\}$ for a set of closed walks with a crossing. Clearly $Y\subset V-K$ since there are precisely $n-k$ vertices from $V-K$ on every contributing set of closed walks, and when one occurs at least twice there must be another one that is missing. Since among the subsets $Z$ fulfilling $Y\subseteq Z \subseteq V-K$ there are as many even sized subsets as odd ones the claim follows. Contributing sets of closed walks that do not cross themselves, i.e. are cycle covers in $G$, will only be counted once, for $X=V-K$.
\qed
\end{proof}
\subsection{Step 2. Polynomial Interpolation}
In the previous section we related the permanent and the Hamiltonian cycles of an arc weighted graph to smaller graphs with weights over a polynomial ring. We want to bring the small instances to map arcs to the original ring to complete the self-reduction. Unfortunately, we are only able to do this if the original ring is a field, and one that has at least polynomially many elements in the original instance size parameter. In particular, we need the following well-known result:

\begin{lem}[Lagrange interpolation]
For any set of pairs $\{(r_i,s_i)\}$ with distinct $r_i$'s and $r_i,s_i\in F$ for $i=1,\ldots,k+1$ where $F$ is a field on at least $k+1$ elements, there is a unique polynomial $p(r)$ in $F[r]$ of degree $k$ such that
$p(r_i)=s_i$ for all $i$. Moreover, the polynomial is given by 
\[ 
p(r)=\sum_{i=1}^{k+1} s_i\prod_{j\neq i} \frac{r-r_j}{r_i-r_j}
\]
\end{lem}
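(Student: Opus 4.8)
The plan is to prove existence by direct verification of the stated closed form, and uniqueness by invoking that a nonzero polynomial of degree at most $k$ over a field has at most $k$ roots. No machinery beyond basic field arithmetic and the division algorithm in $F[r]$ is needed.

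For existence, I would first introduce the Lagrange basis polynomials $L_i(r) = \prod_{j\neq i} \frac{r - r_j}{r_i - r_j}$ for $i = 1,\ldots,k+1$. The denominators $\prod_{j\neq i}(r_i - r_j)$ are nonzero, since each factor $r_i - r_j$ is nonzero because the $r_i$ are distinct, and a product of nonzero elements of a field is nonzero; hence the reciprocal lives in $F$ and each $L_i$ is a genuine element of $F[r]$. Being a product of $k$ linear factors, $L_i$ has degree exactly $k$, so $p(r) = \sum_{i} s_i L_i(r)$ has degree at most $k$. The crucial evaluation is $L_i(r_m) = \delta_{im}$: when $m = i$ every factor of $L_i(r_i)$ is $(r_i - r_j)/(r_i - r_j) = 1$, while when $m \neq i$ the factor indexed by $j = m$ has numerator $r_m - r_m = 0$, annihilating the whole product. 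Therefore $p(r_m) = \sum_i s_i L_i(r_m) = s_m$ for every $m$, so the explicit formula does interpolate the data.

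For uniqueness, suppose $p$ and $q$ are both in $F[r]$ of degree at most $k$ with $p(r_i) = q(r_i) = s_i$ for all $i$. Then $p - q$ has degree at most $k$ yet vanishes at the $k+1$ distinct points $r_1,\ldots,r_{k+1}$. Since $F$ is a field, $F[r]$ is an integral domain, so by the factor theorem each root $r_i$ splits off a distinct linear factor $(r - r_i)$ from $p - q$; if $p - q$ were nonzero this would force $\deg(p-q) \geq k+1$, a contradiction. Hence $p = q$.

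I do not expect any real obstacle here — the argument is classical. The only point deserving care is the phrase ``degree $k$'' in the statement: the constructed $p$ has degree \emph{at most} $k$, possibly strictly less (for instance, if all $s_i$ coincide, $p$ is constant), so the claim should be read with ``degree at most $k$'', and uniqueness is asserted within that class. The single genuinely nontrivial ingredient being used is that a nonzero degree-$d$ polynomial over a field has at most $d$ roots, which rests on the division algorithm in $F[r]$ together with the absence of zero divisors; it is standard but worth naming, since it is precisely what makes the interpolation problem well-posed.
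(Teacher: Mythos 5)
Your proof is correct and complete: existence via the Lagrange basis polynomials $L_i$ with $L_i(r_m)=\delta_{im}$, and uniqueness via the fact that a nonzero polynomial of degree at most $k$ over a field has at most $k$ roots. The paper itself states this lemma without proof, citing it as well known, so there is nothing to compare against; your argument is the standard one, and your remark that the statement should read ``degree \emph{at most} $k$'' (since the interpolant can degenerate to lower degree) is a fair and correct refinement of the paper's phrasing.
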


Specifically, consider an instance $(G,f)$ to $F$-\textsc{HamCycles}. Via Lemma~\ref{lem: hc} we see that $\operatorname{hc}(G)$ is related to a coefficient in a polynomial sum of many smaller instances $(G_k,f_X)$ to $F[r]$-\textsc{HamCycles}. We use here that if we know the result in enough points over $F$ we can reconstruct the polynomial via interpolation.

\begin{lem}
\label{lem: hc int}
For every polynomial term $\operatorname{hc}(G_k,f_X)$ in the outer sum in Lemma~\ref{lem: hc}, it is possible to compute $(n-k)k+1$ instances $(G_k,f_i)$ for $i=1,\ldots, (n-k)k+1$ to the $F$-\textsc{HamCycles} on $k$ vertices, and constants $a_i\in F$ for $i=1,\ldots, (n-k)k+1$ so that 
\[
|r^{n-k}]\operatorname{hc}(G_k,f_X)=\sum_{j=1}^{(n-k)k+1} a_j \operatorname{hc}(G_k,f_j)
\]
\end{lem}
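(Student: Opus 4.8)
The plan is to apply Lagrange interpolation to the single polynomial $g_X(r) := \operatorname{hc}(G_k,f_X) \in F[r]$. First I would bound its degree: each factor in the product defining $f_X(uv)$ from Eq.~\ref{eq: f_X} has $r$-degree at most $1 + (n-k-1) + 1 = n-k$ when expanded through the ranked-walk recursion of Eq.~\ref{eq: rw} (a detour through $l \le n-k$ vertices contributes $r^l$). Since $\operatorname{hc}(G_k,f_X)$ is a sum over $\sigma \in S_k^1$ of products of $k$ such entries, $g_X(r)$ has degree at most $k(n-k) = (n-k)k$. Hence $g_X$ is determined by its values at any $(n-k)k+1$ distinct points of $F$, which exist because by hypothesis $F$ has at least $(n-k)k+1$ elements.

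Next I would pick distinct evaluation points $r_1,\dots,r_{(n-k)k+1} \in F$ and, for each $j$, define $f_j : A_K \to F$ by substituting $r = r_j$ into each polynomial weight $f_X(uv)$, i.e. $f_j(uv) := f_X(uv)\big|_{r=r_j} \in F$. Since evaluation at a point is a ring homomorphism $F[r] \to F$ that commutes with the finite sums and products defining $\operatorname{hc}$, we get $\operatorname{hc}(G_k,f_j) = g_X(r_j)$. This gives the $(n-k)k+1$ instances $(G_k,f_j)$ of $F$-\textsc{HamCycles} required by the statement. Computing each $f_j(uv)$ costs a polynomial-in-$n$ number of $F$-operations: the ranked walks $W_{X,i}(w,z)$ for all $i \le n-k$ and $w,z \in X$ are built by the recursion in Eq.~\ref{eq: rw} (now over $F$, with $r$ replaced by the scalar $r_j$) by straightforward dynamic programming, and then Eq.~\ref{eq: f_X} is a polynomially sized sum.

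Finally I would read off the coefficients $a_j$ from the Lagrange formula: by the interpolation lemma,
\[
g_X(r) = \sum_{j=1}^{(n-k)k+1} g_X(r_j) \prod_{i \neq j} \frac{r - r_i}{r_j - r_i},
\]
so the coefficient $[r^{n-k}]g_X(r)$ equals $\sum_j g_X(r_j)\, c_j$ where $c_j := [r^{n-k}] \prod_{i\neq j} \frac{r-r_i}{r_j-r_i} \in F$. Setting $a_j := c_j$ and using $\operatorname{hc}(G_k,f_j) = g_X(r_j)$ yields the claimed identity $[r^{n-k}]\operatorname{hc}(G_k,f_X) = \sum_{j=1}^{(n-k)k+1} a_j \operatorname{hc}(G_k,f_j)$. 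The constants $c_j$ are computed by expanding the product of $(n-k)k$ linear factors, again in $\operatorname{poly}(n)$ $F$-operations.

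The only genuinely delicate point is the degree bound: one must check carefully that the nested substitutions — ranked walks $W_{X,i}$ carrying degree $i$, plugged into $f_X$ which adds the leading and trailing arcs plus the explicit factor $r$, and then $k$ copies of $f_X$ multiplied together in each term of $\operatorname{hc}(G_k,f_X)$ — really do cap the total degree at $(n-k)k$, so that $(n-k)k+1$ sample points genuinely suffice. Everything else is a routine application of the fact that polynomial evaluation is a homomorphism together with the explicit Lagrange formula. (Note also that the field-size hypothesis is exactly what guarantees enough distinct $r_j$ exist.)
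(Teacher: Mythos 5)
Your proof is correct and follows essentially the same route as the paper: bound the degree of $\operatorname{hc}(G_k,f_X)$ in $r$ by $(n-k)k$, evaluate at $(n-k)k+1$ distinct field points (which the field-size hypothesis supplies), and recover the coefficient of $r^{n-k}$ via Lagrange interpolation. Your write-up is in fact slightly more explicit than the paper's in identifying $a_j$ as $[r^{n-k}]\prod_{i\neq j}\frac{r-r_i}{r_j-r_i}$ and in invoking the evaluation homomorphism to justify $\operatorname{hc}(G_k,f_j)=g_X(r_j)$, but the substance is identical.
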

\begin{proof}
Each entry in the codomain of $f_X$ has degree $n-k$ in $r$ by definition of the ranked walks and the definition of $f_X$ in Eq.~\ref{eq: f_X}. Since $\operatorname{hc}(G_k,f_X)$ is a sum over the product of $k$ arcs' $f_X$'s, the degree of $\operatorname{hc}(G_k,f_X)$ in $r$ is $(n-k)k$.  

Let $r_1,r_2,\ldots,r_m$ be $m$ distinct elements in $F$ and let $f_j$ be equal to $f_X$ evaluated in $r=r_j$. By Lagrange interpolation, it is possible to compute $\operatorname{hc}(G_k,f_X)$ and in particular the coefficient of $r_{n-k}$ from the evaluated polynomial points $\operatorname{hc}(G_k,f_j)$.
\qed
\end{proof}

The $F$-\textsc{Permanent} case is similar: consider an instance $(G_n,f)$. Lemma~\ref{lem: per} states that $\operatorname{per}(G_n,f)$ is related to a coefficient in a polynomial resulting from a sum of many smaller instances $(G_k,f_X)$ to $F[r]$-\textsc{Permanent}.
\begin{lem}
\label{lem: per int}
For every polynomial term $\operatorname{per}(G_k,f_X)\prod_{i=k+1}^{n} C_X(i)$ in the outer sum in Lemma~\ref{lem: per}, it is possible to compute $(n-k)n+1$ instances $(G_k,f_i)$ for $i=1,\ldots, (n-k)n+1$ to the $F$-\textsc{Permanent} on $k$ vertices, and constants $a_i\in F$ for $i=1,\ldots, (n-k)n+1$ so that 
\[
|r^{n-k}]\operatorname{per}(G_k,f_X)\sum_{i=k+1}^n C_X(i)=\sum_{j=1}^{(n-k)n+1} a_j \operatorname{per}(G_k,f_j)
\]
\end{lem}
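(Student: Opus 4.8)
The plan is to follow the argument of Lemma~\ref{lem: hc int} essentially verbatim, the only genuinely new point being that the extra factor $\prod_{i=k+1}^{n} C_X(i)$ appearing in Lemma~\ref{lem: per} is itself a polynomial in $r$, not a scalar, so a little care is needed about what exactly gets interpolated.

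First I would pin down the $r$-degree of the product polynomial $P_X(r) := \operatorname{per}(G_k,f_X)\prod_{i=k+1}^{n} C_X(i)$. From the recursion Eq.~\ref{eq: rw}, each ranked walk $W_{X,i}(u,v)$ is homogeneous of degree $i$ in $r$; hence by Eq.~\ref{eq: f_X} every arc weight $f_X(uv)$ has degree at most $(n-k-1)+1 = n-k$, and $\operatorname{per}(G_k,f_X)$, being a sum of products of $k$ such weights, has degree at most $k(n-k)$. Likewise, by Eq.~\ref{eq: CX} each anchored-cycle polynomial $C_X(i)$ has degree at most $n-k$, so the product of the $n-k$ of them has degree at most $(n-k)^2$. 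Adding the two bounds gives $\deg_r P_X \le k(n-k)+(n-k)^2 = (n-k)n$. Since $F$ has at least $(n-k)n+1$ elements by hypothesis, $P_X$ is uniquely determined by its values at $m := (n-k)n+1$ distinct points, and these values are exactly what I will compute.

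Then I would fix distinct scalars $r_1,\dots,r_m \in F$ and, for each $j$, let $f_j : A_K \to F$ be the specialization of $f_X$ at $r = r_j$ --- obtained by unrolling the recursions for $W_{X,i}$ and the formula Eq.~\ref{eq: f_X} with $r$ set to $r_j$, which costs polynomially many $+,*$ operations over $F$ --- and let $c_j := \prod_{i=k+1}^{n} C_X(i)\big|_{r=r_j} \in F$, similarly cheap to compute. Because evaluation at $r=r_j$ is a ring homomorphism $F[r]\to F$, it commutes with the polynomial expressions for the permanent and for the product, so $P_X(r_j) = \operatorname{per}(G_k,f_j)\cdot c_j$. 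Applying Lagrange interpolation to the $m$ pairs $\{(r_j, P_X(r_j))\}$, the coefficient $[r^{n-k}]P_X$ equals the fixed $F$-linear combination $\sum_{j=1}^m b_j\, P_X(r_j)$ with $b_j = [r^{n-k}]\prod_{l\neq j}\frac{r-r_l}{r_j-r_l}$. Setting $a_j := b_j c_j$ then yields $[r^{n-k}]\operatorname{per}(G_k,f_X)\prod_{i=k+1}^{n}C_X(i) = \sum_{j=1}^{m} a_j\operatorname{per}(G_k,f_j)$, which is the assertion.

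The step I expect to be the only real pitfall is the bookkeeping with $\prod_i C_X(i)$: one should resist interpolating $f_X$ and multiplying afterwards, since $\prod_i C_X(i)$ is not a constant. The clean move is to treat the \emph{whole} product $P_X(r)$ as the object to be interpolated and to absorb the scalar $c_j$ (the value of the cycle factor at $r_j$) into the interpolation coefficient $a_j$; once that is done, the degree count above is what forces $m=(n-k)n+1$ evaluation points, matching the bound on $|F|$ in Lemma~\ref{lem: per sr}.
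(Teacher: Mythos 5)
Your proposal is correct and follows essentially the same route as the paper's own proof: the same degree count $(n-k)k+(n-k)^2=(n-k)n$, evaluation of $f_X$ and of $\prod_{i=k+1}^{n}C_X(i)$ at $(n-k)n+1$ distinct points, and Lagrange interpolation applied to the products $c_j\operatorname{per}(G_k,f_j)$, with the cycle-factor values absorbed into the interpolation constants. The subtlety you flag --- that $\prod_i C_X(i)$ is a polynomial and must be evaluated pointwise rather than factored out --- is exactly how the paper handles it (its $b_j$ is your $c_j$).
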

\begin{proof}
Each entry in the codomain of $f_X$ has degree $n-k$ by definition of the ranked walks and the definition of $f_X$ in Eq.~\ref{eq: f_X}. Since $\operatorname{per}(G_k,f_X)$ is a sum over the product of $k$ arcs $f_X$'s, the degree of $\operatorname{per}(G_k,f_X)$ in $r$ is $(n-k)k$. The degree of $\prod_{i=k+1}^{n} C_X(i)$ is $(n-k)(n-k)$ since every $C_X(i)$ has degree $n-k$ by the definition Eq.~\ref{eq: CX}. Altogether, $\operatorname{per}(G_k,f_X)\prod_{i=k+1}^{n} C_X(i)$ has degree $(n-k)n$.

Let $r_1,r_2,\ldots,r_m$ be $m$ distinct elements in $F$ and let $f_j$ be equal to $f_X$ evaluated in $r=r_j$. Likewise, let $b_j$ be equal to $\prod_{i=k+1}^{n} C_X(i)$ evaluated in $r=r_j$.
By Lagrange interpolation, it is possible to compute the coefficent of $r^{n-k}$ in $\operatorname{per}(G_k,f_X)\prod_{i=k+1}^{n} C_X(i)$ from the evaluated polynomial points $b_j\operatorname{per}(G_k,f_j)$.
\qed
\end{proof}

The self-reduction for $F$-\textsc{Permanent} Lemma~\ref{lem: per sr} follows from the combination of Lemma~\ref{lem: per} and Lemma~\ref{lem: per int}, after observing that each $X\subseteq V-K$ and each $r\in 1,\ldots,(n-k)n+1$ corresponds to one small instance. Similarly, the self-reduction for $F$-\textsc{HamCycles} Lemma~\ref{lem: hc sr} follows from Lemma~\ref{lem: hc} and Lemma~\ref{lem: hc int} with  $X\subseteq V-K$ and $r\in 1,\ldots,(n-k)k+1$. 
It remains to validate the runtime in terms of the number of arithmetic operations used. To compute a small instance $(G_k,f_i)$ in Lemma~\ref{lem: per sr} (Lemma~\ref{lem: hc sr} respectively), corresponding to a particular $X\subseteq V-K$ and $r\in 1,\ldots,(n-k)n+1$, we see from the definitions Eqs.~\ref{eq: f_X} and~\ref{eq: CX} that the instance elements are computed as walks in $X$ for a fixed $r$. We can compute the elements through the recursive definition of the ranked walks Eq.~\ref{eq: rw} via dynamic programming in only polynomial in $n$ number of arithmetic operations. 

\section{The Algorithms}
\label{sec: alg}
In this section we prove our main theorems:
\begin{thm}
\label{thm: per}
Any single $n\times n$ matrix instance of \textsc{Permanent} with $\operatorname{poly}(n)$-bit integer elements can be solved deterministically in $2^{n-\Omega(\sqrt{n/\log{n}})}$ time.
\end{thm}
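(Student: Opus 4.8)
The plan is to combine the self-reduction of Lemma~\ref{lem: per sr} with the Chinese remainder theorem and a tabulation step, following the outline of Section~1. Write $M$ for the input matrix and let $b=\operatorname{poly}(n)$ bound the bit-length of its entries, so $|\operatorname{per}(M)|\le n!\,2^{nb}=2^{\operatorname{poly}(n)}$. The first move is a reduction to small moduli: pick $t=\operatorname{poly}(n)$ distinct primes $p_1,\dots,p_t$, each of magnitude $\operatorname{poly}(n)$, with $\prod_j p_j>2|\operatorname{per}(M)|$. By the prime number theorem such primes exist and can be listed in $\operatorname{poly}(n)$ time, and we may additionally insist each $p_j>(n-k)n+1$ for the value of $k$ fixed below. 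It then suffices to compute $\operatorname{per}(M)\bmod p_j$ for every $j$ and recover the signed integer $\operatorname{per}(M)$ by the constructive Chinese remainder theorem, everything outside the modular computations costing only $\operatorname{poly}(n)$.

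Fix one prime $p=p_j$ and regard $M$ as an instance $(G_n,f)$ of $\mathbb{Z}_p$-\textsc{Permanent}. Set $k=\lfloor c\sqrt{n/\log n}\rfloor$ for a constant $c>0$ chosen below (for the finitely many small $n$ where this is not $<n$ we just run Ryser's algorithm directly). Since $\mathbb{Z}_p$ is a field with more than $(n-k)n+1$ elements, Lemma~\ref{lem: per sr} produces $m=((n-k)n+1)2^{n-k}$ instances $(G_k,f_i)$ of $\mathbb{Z}_p$-\textsc{Permanent} and coefficients $a_i\in\mathbb{Z}_p$ with $\operatorname{per}(G_n,f)=\sum_{i=1}^m a_i\operatorname{per}(G_k,f_i)$, each $f_i$ and $a_i$ computable with $\operatorname{poly}(n)$ ring operations, hence $\operatorname{poly}(n)$ bit operations since $p=\operatorname{poly}(n)$.

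Now comes the speed-up. Each $(G_k,f_i)$ is just a $k\times k$ matrix over $\mathbb{Z}_p$, so there are at most $p^{k^2}$ distinct ones. As $i$ runs from $1$ to $m$ we maintain a table --- a trie over the $k^2$ base-$p$ digits, say, so that each lookup/update is deterministic and costs $\operatorname{poly}(n)$ time --- recording, for every distinct matrix seen, the running sum of the coefficients $a_i$ attached to it; this enumeration phase thus runs in $\operatorname{poly}(n)\,2^{n-k}$ time and $p^{k^2}\operatorname{poly}(n)$ space. Afterwards, for each of the at most $\min(m,p^{k^2})$ distinct matrices $M'$ in the table we compute $\operatorname{per}(M')\bmod p$ by Ryser's $\operatorname{poly}(k)2^k$-time formula (which uses only $+,-,\times$ and so is valid over $\mathbb{Z}_p$), and we return $\sum_{M'}(\text{stored coefficient of }M')\cdot\operatorname{per}(M')\bmod p$; this equals $\operatorname{per}(M)\bmod p$ by Lemma~\ref{lem: per sr}. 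The cost for the single prime is therefore $\operatorname{poly}(n)\bigl(2^{n-k}+p^{k^2}2^k\bigr)$. Since $\log_2 p=O(\log n)$ we have $p^{k^2}=2^{O(k^2\log n)}=2^{O(c^2 n)}$, so choosing $c$ a small enough constant (depending on the polynomial degree of $b$, i.e. on the largest absolute entry of $M$) forces $p^{k^2}2^k\le 2^{(1-\varepsilon)n}$ for a fixed $\varepsilon>0$, which is dominated by $2^{n-k}=2^{\,n-\Omega(\sqrt{n/\log n})}$. Summing over the $\operatorname{poly}(n)$ primes and adding the $\operatorname{poly}(n)$-time reconstruction leaves the bound unchanged.

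Given all of Section~\ref{sec: sr}, correctness is immediate, and the proof is essentially just this bookkeeping; the substantive point --- already flagged in the introduction --- is that the generated family of subinstances is \emph{tiny}: $p^{k^2}\ll 2^n$ once $k=O(\sqrt{n/\log n})$ and $p=\operatorname{poly}(n)$, so an overwhelming fraction of the $2^{n-k}$ instances must coincide and tabulation genuinely pays off. The one place that demands care is calibrating $k$: we want $k$ as large as possible to shrink $2^{n-k}$, yet $k^2\log p$ must stay below $n$ to keep the table and the $p^{k^2}$ Ryser calls cheap, which pins $k$ at $\Theta(\sqrt{n/\log n})$ --- balancing the two terms makes both $2^{\,n-\Theta(\sqrt{n/\log n})}$ --- and makes the constant $c$ depend on the entry size, exactly the dependence announced in the overview. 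A secondary subtlety, which the prime number theorem settles, is that the two demands on the primes are compatible: small enough that $\operatorname{poly}(n)$ of them multiply past a $\operatorname{poly}(n)$-bit integer, yet each exceeding $(n-k)n+1$.
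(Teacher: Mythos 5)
Your proposal is correct and follows essentially the same route as the paper: reduce modulo $\operatorname{poly}(n)$-size primes via CRT, apply the self-reduction of Lemma~\ref{lem: per sr} over each $\mathbb{Z}_p$ with $k=\Theta(\sqrt{n/\log n})$, tabulate coefficients of the at most $p^{k^2}\ll 2^{n-k}$ distinct $k\times k$ matrices, run Ryser once per distinct matrix, and reassemble. The only differences are cosmetic (the paper takes primes $>n^2$ to guarantee the field-size hypothesis and fixes $k=\lfloor\sqrt{0.99n/\log_2 p_{\max}}\rfloor$ explicitly, where you phrase the same calibration via a small constant $c$).
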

\begin{thm}
\label{thm: hc}
Any single $n$-vertex directed graph instance of \textsc{HamCycles} with $\operatorname{exp}(\operatorname{poly}(n))$ number of arcs can be solved deterministically in $2^{n-\Omega(\sqrt{n/\log{n}})}$ time.
\end{thm}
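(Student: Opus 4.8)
The plan is to prove Theorem~\ref{thm: hc} by combining the self-reduction of Lemma~\ref{lem: hc sr} with tabulation and the Chinese remainder theorem; Theorem~\ref{thm: per} is entirely analogous via Lemma~\ref{lem: per sr}. First I would pass to modular arithmetic. Encode the input multigraph as a weight function $f:A\to\mathbb{Z}$ assigning each arc its multiplicity; since there are $\exp(\operatorname{poly}(n))$ arcs, every weight has $\operatorname{poly}(n)$ bits, and hence the answer $\operatorname{hc}(G_n,f)=\sum_{\sigma\in S_n^1}\prod_i f(i\sigma(i))\le (n-1)!\,(\max_{uv}f(uv))^n$ has $B=\operatorname{poly}(n)$ bits. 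Pick distinct primes $p_1,\dots,p_t$, each polynomially bounded in $n$ and each at least $(n-k)k+1$ for the value of $k$ fixed below, with $\prod_j p_j>2^{B+1}$; by the prime number theorem $t=\operatorname{poly}(n)$ such primes exist (take them, say, in the interval $[n^2,\,2\max(n^2,B)]$). For each $j$ we compute $\operatorname{hc}(G_n,f)\bmod p_j$ by working over the field $F=\mathbb{Z}_{p_j}$, which satisfies the size hypothesis of Lemma~\ref{lem: hc sr}; the answer is reassembled from the residues by the constructive Chinese remainder theorem in $\operatorname{poly}(n)$ additional time.

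Second, fix the kernel size $k=c\sqrt{n/\log n}$ for a suitably small constant $c>0$, and for each prime $p=p_j$ apply Lemma~\ref{lem: hc sr}: it yields $m=((n-k)k+1)2^{n-k}=\operatorname{poly}(n)2^{n-k}$ instances $(G_k,f_i)$ of $\mathbb{Z}_p$-\textsc{HamCycles} with coefficients $a_i\in\mathbb{Z}_p$ satisfying $\operatorname{hc}(G_n,f)\equiv\sum_{i=1}^m a_i\operatorname{hc}(G_k,f_i)\pmod p$, each triple produced in $\operatorname{poly}(n)$ operations over $\mathbb{Z}_p$, i.e.\ $\operatorname{poly}(n)$ bit operations since elements of $\mathbb{Z}_p$ have $O(\log n)$ bits. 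The crux is now a pigeonhole observation: an instance $(G_k,f_i)$ is merely an assignment of a $\mathbb{Z}_p$-element to each of the at most $k^2$ arcs of $G_k$, so there are at most $p^{k^2}=2^{O(k^2\log n)}=2^{O(c^2n)}$ distinct such instances, and for $c$ small enough this is $2^{(1-\varepsilon)n}$ for some $\varepsilon>0$ --- far below our target --- even though we generate $m\approx 2^{n-k}$ of them. I would therefore stream the $m$ generated pairs $(f_i,a_i)$ into a dictionary keyed by the $k^2$-tuple describing $f_i$ (a balanced search tree or trie over such tuples, supporting insert/lookup in $\operatorname{poly}(n)$ time), accumulating for each distinct key the sum of the corresponding $a_i$'s; this step costs $m\cdot\operatorname{poly}(n)=\operatorname{poly}(n)2^{n-k}$.

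Third, for each distinct instance $(G_k,f^\star)$ surviving in the dictionary, with accumulated coefficient $A^\star$, compute $\operatorname{hc}(G_k,f^\star)$ once using the classical $\operatorname{poly}(k)2^k$-operation inclusion--exclusion algorithm of Kohn et al.~\cite{KGK77} (which is valid over any commutative ring, in particular $\mathbb{Z}_p$), and output $\sum_{f^\star} A^\star\operatorname{hc}(G_k,f^\star)$ as $\operatorname{hc}(G_n,f)\bmod p$. This step costs at most $p^{k^2}\cdot\operatorname{poly}(k)2^k=2^{(1-\varepsilon)n}\cdot2^{o(n)}=2^{(1-\Omega(1))n}$, dominated by the previous step. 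Summing over the $t=\operatorname{poly}(n)$ primes and adding the final reassembly, the total running time is $\operatorname{poly}(n)\bigl(2^{n-k}+2^{(1-\Omega(1))n}\bigr)=2^{n-c\sqrt{n/\log n}+o(\sqrt{n/\log n})}=2^{n-\Omega(\sqrt{n/\log n})}$, with space bounded by the dictionary size, $2^{(1-\Omega(1))n}\operatorname{poly}(n)$, which is below the time bound. For Theorem~\ref{thm: per} one repeats this verbatim with Lemma~\ref{lem: per sr} in place of Lemma~\ref{lem: hc sr} (so the primes need only exceed $(n-k)n+1$, still a fixed polynomial), with Ryser's $\operatorname{poly}(k)2^k$ algorithm on the small instances, and with $B$ read off from $n!$ times the $n$-th power of the largest matrix entry.

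I expect the only real obstacle to be the parameter balancing hidden in the word ``suitably'': the bound $p^{k^2}\ll 2^n$ is exactly what forces $k=\Theta(\sqrt{n/\log n})$ --- any larger $k$ lets the number of distinct small instances swamp $2^n$, while any smaller $k$ erodes the $2^{n-k}$ saving --- so one must verify $k^2\log p=O(c^2n)$ with $c$ small enough that $O(c^2n)<n$, while simultaneously keeping every $p_j$ above the $(n-k)k+1$ threshold demanded by Lemma~\ref{lem: hc sr} and keeping $t$ polynomial. Everything else (the dictionary bookkeeping, the classical $\operatorname{poly}(k)2^k$ subroutine, the Chinese remainder reconstruction, and the accounting of bit-lengths) is routine.
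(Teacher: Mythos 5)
Your proposal is correct and follows essentially the same route as the paper: reduce modulo polynomially many primes of size $\operatorname{poly}(n)$ via CRT, apply the self-reduction of Lemma~\ref{lem: hc sr} with $k=\Theta(\sqrt{n/\log n})$ chosen so that $p^{k^2}\ll 2^n$, tabulate and merge the coincident small instances, solve each distinct one once with the classical $\operatorname{poly}(k)2^k$ inclusion--exclusion algorithm, and reassemble. The only (immaterial) difference is that you key the accumulated coefficients in a comparison-based dictionary while the paper uses a direct-addressed table over $\mathbb{Z}_p^{k\times k}$.
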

We immediately observe that the above theorem via a standard embedding of the $(\operatorname{min},+)$-semiring on the integers, and polynomial interpolation. That is, we introduce yet another indeterminate $z$, associate an arc of  weight $w$ with $z^w$, and finally solve for the smallest non-zero monomial in the resulting polynomial, see e.g. \cite{KGK77}. Since the evaluated polynomial is of degree at most $Mn^2$, we get
\begin{cor}
The shortest Asymmetric Traveling Salesman Problem route in an $n$-vertex graph with integer arc weights in $[0,\dots,M]$ can be computed in $Mn^22^{n-\Omega(\sqrt{n/\log (Mn)})}+M^2n^4$ time.
\end{cor}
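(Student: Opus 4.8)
The plan is to reduce the Asymmetric TSP problem to a polynomial number of instances of $\textsc{Permanent}$ (or equivalently $\textsc{HamCycles}$) over a suitably chosen ring, and then invoke Theorem~\ref{thm: hc}. Concretely, I would take the $n$-vertex input graph with arc weights $w(uv)\in\{0,\dots,M\}$ and introduce a fresh indeterminate $z$, replacing each arc weight $w$ by the monomial $z^w$. Working in the polynomial ring $\mathbb{Z}[z]$, the $\textsc{HamCycles}$ quantity $\operatorname{hc}(G,f)$ becomes a polynomial in $z$ in which the coefficient of $z^t$ counts the Hamiltonian cycles of total weight exactly $t$; the smallest exponent $t$ with a nonzero coefficient is the optimal ATSP tour length. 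Since every Hamiltonian cycle uses $n$ arcs each of weight at most $M$, this polynomial has degree at most $Mn$, though it is safer to bound it by $Mn^2$ as the corollary statement does, to absorb any slack.

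The second step is to evaluate this $\mathbb{Z}[z]$-valued $\textsc{HamCycles}$ instance by polynomial interpolation in $z$: pick $Mn^2+1$ distinct integer evaluation points $z_0,z_1,\dots$, substitute each into the monomials $z^w$ to obtain an integer-weighted $\textsc{HamCycles}$ instance, and solve each such instance with the algorithm of Theorem~\ref{thm: hc}. Here one must be careful that the integers stay $\operatorname{poly}(n)$-bit so that Theorem~\ref{thm: hc} applies: using evaluation points of magnitude $\operatorname{poly}(Mn)$, each monomial value is at most $(\operatorname{poly}(Mn))^{M}$, which has $O(M\log(Mn))$ bits. For $M=\operatorname{poly}(n)$ this is $\operatorname{poly}(n)$ bits as required; more generally the running time of Theorem~\ref{thm: hc} becomes $2^{n-\Omega(\sqrt{n/\log(Mn)})}$ per evaluation point because the bit-length governs the constant $c$ hidden in the choice of kernel size $k=c\sqrt{n/\log(Mn)}$. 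Multiplying by the $Mn^2+1$ evaluation points and adding the $O(M^2n^4)$ cost of Lagrange interpolation over the rationals (reconstructing a degree-$Mn^2$ polynomial from its values) yields the claimed bound $Mn^22^{n-\Omega(\sqrt{n/\log(Mn)})}+M^2n^4$.

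The main obstacle is bookkeeping around the interaction between $M$ and the bit-length parameter in Theorem~\ref{thm: hc}: one has to verify that substituting $z^w$ for $w\le M$ at points of size $\operatorname{poly}(Mn)$ gives integers whose logarithm is $O(M\log(Mn))$, and then trace through the proof of Theorem~\ref{thm: hc} — which internally reduces modulo primes $p$ of size $\operatorname{poly}$ in the bit-length and sets $k=c\sqrt{n/\log n}$ — to confirm that the $\log n$ in the exponent gets replaced by $\log(Mn)$ and nothing else degrades. One also needs that the coefficients of the target polynomial, which count Hamiltonian cycles and can be as large as $(n-1)!$, are recovered exactly by rational interpolation; since $(n-1)!$ has only $O(n\log n)$ bits this adds no asymptotic overhead. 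A minor point is that the $(\min,+)$ encoding is standard (see~\cite{KGK77}) and the smallest-monomial extraction is immediate once the polynomial is in hand, so no additional machinery beyond Theorem~\ref{thm: hc} and Lagrange interpolation is needed.
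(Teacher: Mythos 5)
Your proposal is correct and follows essentially the paper's own (one-sentence) argument: the $(\min,+)$-embedding $w\mapsto z^w$, evaluation of the resulting $\mathbb{Z}[z]$-valued \textsc{HamCycles} polynomial at $O(Mn^2)$ points each handled by Theorem~\ref{thm: hc} (with the bit-length of the substituted weights turning the $\log n$ in the kernel size into $\log(Mn)$), Lagrange interpolation at cost $O(M^2n^4)$, and extraction of the smallest non-zero monomial. Your added bookkeeping on degree bounds, the size of $z_i^{w}$, and exact recovery of coefficients bounded by $(n-1)!$ is a faithful expansion of what the paper leaves implicit.
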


On the top level, the idea of the algorithms is to bring the computations down to small finite fields. We next use the self-reductions from Section~\ref{sec: sr} to transform the input matrix/graph down to so small ones that several of them will be identical. By tabulating which ones of them have been constructed in this process and how often, it then suffices to compute the permanent of the small matrices/the Hamiltonian cycles of the small graphs only once. To make this precise we first need some elementary results from number theory.

\subsection{Preliminaries on Modular Arithmetic}
The well-known Chinese remainder theorem has two parts, an existence and a constructive one. The existence part states that an integer solution to a set of linear modular equations is uniquely defined in the range between zero and the least common multiple of the moduli. The constructive part describes how to recover the solution given the modular equations. We state them here in a slightly modified form as we will need them
\begin{lem}[CRT]
\label{lem: crt}
Given $m$ distinct primes $p_i$, and residues $0\leq a_i<p_i$, $1\leq i \leq m$, 
\begin{itemize}
\item{Existence:}
There is a unique integer $n$ in 
$
-\left\lfloor \frac{\prod_{i=1}^m p_i}{2} \right\rfloor \leq n < \left\lceil \frac{\prod_{i=1}^m p_i}{2} \right\rceil$
fulfilling $n\equiv a_i (\mbox{ mod }p_i),1\leq i\leq m$. 
\item{Construction:}
$n$ can be computed by evaluating $n_+=\sum_{i=1}^m a_ir_i$ where
$
r_i=\prod_{j\neq i} p_j \left(\left(\prod_{j\neq i} p_j\right)^{-1} (\mbox{ mod } p_i)\right) 
$
and then setting 

$n=n_+$ if $n_+<\frac{\prod_{i=1}^m p_i}{2} $, and $n=n_+-\prod_{i=1}^m p$ otherwise.
\end{itemize}
\end{lem}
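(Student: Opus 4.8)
The plan is to recognize that this is the classical Chinese remainder theorem, restated only to pin down a specific symmetric set of residue representatives; so the task is to verify the two itemized claims against the standard argument. Throughout I would write $N=\prod_{i=1}^m p_i$.

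For the existence part, I would first use that the $p_i$ are \emph{distinct} primes, hence pairwise coprime, so that $p_1p_2\cdots p_m\mid z$ whenever $p_i\mid z$ for every $i$. Consequently the reduction map $\mathbb{Z}/N\mathbb{Z}\to(\mathbb{Z}/p_1\mathbb{Z})\times\cdots\times(\mathbb{Z}/p_m\mathbb{Z})$ has trivial kernel, and since both sides are finite of the same cardinality $N$ it is a bijection. Thus exactly one residue class modulo $N$ satisfies all the congruences $n\equiv a_i\pmod{p_i}$. It then remains to check that the stated interval $-\lfloor N/2\rfloor\le n<\lceil N/2\rceil$ consists of exactly $N$ consecutive integers — its length is $\lceil N/2\rceil+\lfloor N/2\rfloor=N$ — so it contains precisely one member of each residue class modulo $N$, giving existence together with the asserted uniqueness.

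For the construction, I would verify that each $r_i$ behaves like a Kronecker indicator modulo the $p_j$'s. Writing $Q_i=\prod_{j\neq i}p_j=N/p_i$, for $j\neq i$ we have $p_j\mid Q_i$, hence $r_i=Q_i\bigl(Q_i^{-1}\bmod p_i\bigr)\equiv 0\pmod{p_j}$; while modulo $p_i$ the factor $Q_i$ is coprime to $p_i$, so $Q_i^{-1}\bmod p_i$ is well defined and $r_i\equiv 1\pmod{p_i}$. Therefore $n_+=\sum_{i=1}^m a_i r_i\equiv a_i\pmod{p_i}$ for every $i$, so $n_+$ is congruent modulo $N$ to the unique solution found above. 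Finally, reducing modulo $N$ and then subtracting $N$ exactly when the reduced value is at least $N/2$ places it in $[-\lfloor N/2\rfloor,\lceil N/2\rceil)$: a reduced value $<N/2$ already lies in $[0,\lceil N/2\rceil)$, and one $\ge N/2$ is moved by $-N$ into $[-\lfloor N/2\rfloor,0)$; by uniqueness this is $n$.

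I do not expect any real obstacle; the only points needing a little care are (i) confirming the stated half-open interval has exactly $N$ integers, so that "unique" is justified, and (ii) noting that the displayed sign rule is the final step of a reduction modulo $N$ — i.e. it should be read as applied after reducing $n_+$ into $[0,N)$ — since the raw sum $\sum_i a_i r_i$ need not itself lie in that range. Both are immediate once made explicit.
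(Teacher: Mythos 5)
Your proof is correct. The paper states this lemma without any proof, treating it as the classical Chinese remainder theorem in a ``slightly modified form,'' so there is nothing to compare against; your argument is the standard one (bijectivity of the reduction map $\mathbb{Z}/N\mathbb{Z}\to\prod_i\mathbb{Z}/p_i\mathbb{Z}$, the Kronecker-indicator property of the $r_i$, and the count $\lceil N/2\rceil+\lfloor N/2\rfloor=N$ for the symmetric interval). You are also right to flag that the displayed construction only lands in the stated range after $n_+$ is first reduced modulo $N=\prod_i p_i$ --- the raw sum $\sum_i a_i r_i$ can exceed $N$ --- which is a small imprecision in the lemma as written rather than in your proof.
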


We also use the following bound of the prime number theorem to answer how many and large primes we will need to break down a computation using the CRT:
\begin{lem}[Rosser~\cite{R41}]
\label{lem: pnt}
For every integer $n\geq 55$ the number of primes $\pi(n)$ less than or equal to $n$ obey $n/(ln(n)+2)<\pi(n)<n/(ln(n)-4)$.
\end{lem}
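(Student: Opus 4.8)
The plan is to derive these inequalities from an effective form of the Prime Number Theorem, because the shape $n/(\ln n\pm c)$ forces the leading constant in $\pi(n)\sim n/\ln n$ to be exactly $1$, and that constant is unattainable by the elementary Chebyshev inequalities alone (they yield only $0.92\,n/\ln n\le \pi(n)\le 1.11\,n/\ln n$, whose constants are bounded away from $1$). First I would pass from $\pi$ to the Chebyshev function $\theta(x)=\sum_{p\le x}\ln p$ by Abel summation, which gives the exact identity
\[
\pi(x)=\frac{\theta(x)}{\ln x}+\int_{2}^{x}\frac{\theta(t)}{t\,\ln^{2}t}\,dt.
\]
It therefore suffices to control $\theta(x)$, or equivalently $\psi(x)=\sum_{p^{k}\le x}\ln p$, since the difference $\psi(x)-\theta(x)=O\!\left(\sqrt{x}\,\ln x\right)$ collects only prime powers and is negligible against the target error.

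The analytic heart of the argument is an effective estimate $|\psi(x)-x|\le E(x)$. I would obtain it from the von Mangoldt explicit formula
\[
\psi(x)=x-\sum_{\rho}\frac{x^{\rho}}{\rho}-\ln(2\pi)-\tfrac12\ln\!\left(1-x^{-2}\right),
\]
the sum running over the nontrivial zeros $\rho$ of the Riemann zeta function. Bounding the zero sum requires the deep input of a \emph{zero-free region}: the de la Vall\'ee Poussin estimate that $\zeta(\sigma+it)\neq 0$ whenever $\sigma\ge 1-c/\ln(|t|+2)$, together with explicit bounds on the number of zeros up to a given height and Rosser's verified computation that the lowest zeros lie on the critical line. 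This yields $E(x)=O\!\left(x\exp(-c\sqrt{\ln x})\right)$, which is $o(x/\ln^{A}x)$ for every fixed $A$. This is the main obstacle, and it is exactly the delicate, computation-heavy part of Rosser's work: turning the qualitative Prime Number Theorem into \emph{explicit} constants; the elementary Chebyshev bounds cannot reach it.

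Feeding the resulting bound $\theta(x)=x+o(x/\ln^{2}x)$ into the Abel-summation identity reproduces the truncated $\operatorname{li}(x)$ estimate
\[
\pi(x)=\frac{x}{\ln x}+\frac{x}{\ln^{2}x}+o\!\left(\frac{x}{\ln^{2}x}\right).
\]
Expanding the targets in the same way gives
\[
\frac{x}{\ln x+2}=\frac{x}{\ln x}-\frac{2x}{\ln^{2}x}+O\!\left(\frac{x}{\ln^{3}x}\right),\qquad \frac{x}{\ln x-4}=\frac{x}{\ln x}+\frac{4x}{\ln^{2}x}+O\!\left(\frac{x}{\ln^{3}x}\right),
\]
so both one-sided differences equal $3x/\ln^{2}x+o(x/\ln^{2}x)$ and are positive for all large $x$. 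Note that the comparison is decided only at the second-order term $x/\ln^{2}x$, which is precisely why the bare asymptotic $\pi(x)\sim x/\ln x$ does not suffice and the sharper remainder $o(x/\ln^{2}x)$ furnished by the zero-free region is indispensable.

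Finally I would close the gap between ``all large $x$'' and the stated threshold. The denominator $\ln n-4$ is positive exactly when $n>e^{4}\approx 54.6$, which already pins the hypothesis at $n\ge 55$; for $n$ just above this value the right-hand side $n/(\ln n-4)$ is enormous so the upper bound is trivial, while the lower bound is loose, so both hold easily. For the finite remaining window between $55$ and the point where the explicit error estimate takes over, I would verify $n/(\ln n+2)<\pi(n)<n/(\ln n-4)$ by direct computation of $\pi(n)$, completing the proof.
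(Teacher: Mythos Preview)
The paper does not prove this lemma at all: it is quoted verbatim as a known result from Rosser~\cite{R41} and used as a black box in the running-time analysis. So there is no ``paper's own proof'' to compare your proposal against.

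That said, your outline is a faithful high-level sketch of how Rosser actually obtains such explicit bounds: pass from $\pi$ to $\theta$ (or $\psi$) by partial summation, control $\psi(x)-x$ via the explicit formula together with an explicit de la Vall\'ee Poussin zero-free region and a numerical verification of low-lying zeros, and then close the finite initial range by direct computation of $\pi(n)$. Your remark that elementary Chebyshev-type arguments cannot reach the required leading constant $1$ is correct and is exactly why the analytic input is unavoidable. As written your proposal is a sketch rather than a proof---the hard work lies entirely in making the constants in $E(x)$ explicit and in carrying out the finite check---but since the paper itself only cites the result, nothing more is expected here.
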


\subsection{The Algorithm}
We will first describe the algorithm for the \textsc{Permanent} case Thm.~\ref{thm: per} , and then point out the few changes needed for the \textsc{HamCycles} case Thm.~\ref{thm: hc}. 
We begin by describing the algorithm in pseudo-code in Algorithm \ref{alg: per}. Next we will  explain the steps in more detail.
\begin{algorithm}
\caption{\texttt{Permanent} $\operatorname{per}(G_n,f)$}
\label{alg: per}
\begin{algorithmic}[1]
\STATE Let $M$ be the largest absolute value in the image of $f$.
\STATE Let $P$ be the smallest set of primes $>n^2$ such that $\prod_{p\in P} p> 2Mn!$.
\STATE Let $k=\lfloor \sqrt{.99n/\log_2 {p_{\mbox{max}}}} \rfloor$ where $p_{\mbox{max}}=\max_{p\in P} p$.
\FOR {each prime $p\in P$}
\STATE Initialize a table $T$ from all $\mathbb{Z}_p^{k\times k}$ matrices to the positive integers with all zeros.
\STATE Evaluate $f_{(p)}=f(\mbox{ mod } p)$.
\STATE Compute $m=(n-k)n2^{n-k}$ instances $(G_k,f_j)$ and constants $a_j$ for $j=1,\ldots,m$ to $\mathbb{Z}_p$-\textsc{Permanent} such that $\operatorname{per}(G_n,f_{(p)}) = \sum_{j=1}^m a_i\operatorname{per}(G_k,f_j)$.
\FOR {$j=1,\ldots,m$}
\STATE Let $T(f_j)=T(f_j)+a_j$ i.e. increase the entry in $T$ for the matrix represented by $f_j$ with $a_i$.
\ENDFOR
\STATE Set $sum=0$.
\FOR {each $g$ with non-zero table entry $T(g)$}
\STATE Compute $\operatorname{per}(G_k,g)$ using Ryser's permanent algorithm.
\STATE Let $sum=sum+T(g)\operatorname{per}(G_k,g) (\mbox{ mod }p)$.
\ENDFOR
\STATE Store $per(G_n,f_{(p)})=sum$ 
\ENDFOR
\STATE Compute the permanent over $\mathbb{Z}$ using the stored $\operatorname{per}(G_n,f_{(p)})$ for all $p\in P$ using the constructive part of CRT.
\end{algorithmic}
\end{algorithm}

The existence part of CRT Lemma~\ref{lem: crt} makes it clear that to compute an integer function solely with the operations $+$ and $*$ over the integers, one can just as well compute it modulo several primes and assemble the result in the end. Both the \textsc{Permanent} and the \textsc{HamCycles} problems are defined as sum--products, so to compute their quantities modulo a prime $p$, we can replace the input integers with their residues modulo $p$. Steps 2,4,6 of the algorithm do precisely that, transform the input integer \textsc{Permanent} instance to instances to $\mathbb{Z}_p$-\textsc{Permanent} for primes $p$. Step 7 next generates $(n-k)n2^{n-k}$ instances to the $\mathbb{Z}_p$-\textsc{Permanent} problem using the constructive proof for Lemma~$\ref{lem: per sr}$. Steps 8-10 counts the occurrences of each of the different matrices in $\mathbb{Z}_p^
{k\times k}$ by keeping track of the total coefficients of each of the smaller matrices' permanents in Lemma~\ref{lem: per sr}. Steps 11-16 computes the solution to the $n\times n$-matrix permanent $\operatorname{per}(G_n,f_{(p)})$, and finally step 18 assembles the modular results using the constructive part of the CRT Lemma~\ref{lem: crt}. The correctness of the algorithm follows from Lemma~\ref{lem: crt} and the self-reduction Lemma~\ref{lem: per sr},  after noting that enough primes are chosen in step 2.

To bound the runtime, the only question is how many and how large primes are required, and indirect, how large tables will be used? The permanent is a sum of $n!$ products of $n$ elements from the input function $f$. In step $2$ of the algorithm we measure the absolute max over all elements used to conclude that $|\operatorname{per}(G_n,f)|\leq M^nn!<2^{n^c}$ for some positive constant $c$ when the input entries have $\operatorname{poly}(n)$ bits. From Lemma~\ref{lem: pnt} we see that there are at least $m=n^d/(dln(n)+2)-n^2/(2ln(n)-4)$ primes larger than $n^2$ but smaller than $n^d$ for $n\geq 55$. We want the product of the first $m$ primes larger than $n^2$, the set of primes $P$ in step 4 of the algorithm, to be larger than $2\cdot 2^{n^c}$, i.e. $n^{2m}>2^{n^c+1}$. It is straightforward to note that a constant $d$ depending on $c$ will suffice, in fact using $d=c+3$ is more than enough. Hence $p_{\mbox{max}}$ in step 3 is bounded by $n^d$ for $d$ constant and $k$ is $\Omega(\sqrt{n/\log n})$.
 
For each prime $p\in P$ in step 4, we use a table $T$ in step 5-14 with one entry per matrix in $\mathbb{Z}_p^{k\times k}$. 
An upper bound on the number of matrices in $\mathbb{Z}_p^{k\times k}$ using $p_{\mbox{max}}$ from step 3 and $k$ from step 4 of the algorithm is $(\log_2 p_{\mbox{max}})^{k^2}<2^{0.99n}$. 
The runtime of steps 5-10 is easily seen to be $O((n-k)n2^{n-k})$ from the bound on the table $T$'s size and Lemma~\ref{lem: per sr}.
Computing  the permanent of each of the matrices is a $O(k2^k)$ time task with Ryser's algorithm~\cite{R63}, so the total runtime of steps 11-16 is $o(2^{n-k})$. Altogether, the loop at steps 4-17 is run a polynomial number of times, and step 18 is polynomial time, so we get $\operatorname{poly}(n)2^{n-k}$ time in total which is $2^{n-\Omega(\sqrt{n/\log n})}$ time as claimed.

To adjust the algorithm and the proof to \textsc{HamCycles}, all we need to do is to replace Lemma~\ref{lem: per sr} for Lemma~\ref{lem: hc sr} in step 7 of the algorithm and the analysis, and exchange Ryser's algorithm for the permanent in step 13 for e.g. Bax's \cite{B93} Hamiltonian cycle counting algorithm.
 
 \section*{Acknowledgments}
 I thank Thore Husfeldt, Alexander Golovnev, Petteri Kaski, Mikko Koivisto, and Ryan Williams and several anonymous referees for comments on an earlier draft of the paper and stimulating discussions on the subject.

\end{document}